\def\draft{1}  % 0 to exclude author notes
    \let\original@footnotemark\footnotemark
    \newcommand{\align@footnotemark}{%
      \ifmeasuring@
        \chardef\@tempfn=\value{footnote}%
        \original@footnotemark
        \setcounter{footnote}{\@tempfn}%
      \else
        \iffirstchoice@
          \original@footnotemark
        \fi
      \fi}
    \pretocmd{\start@align}{\let\footnotemark\align@footnotemark}{}{}
    \newcommand{\Rnote}[1]{\begin{framed}\noindent \textcolor{red}{{#1}}\end{framed}} % boxed and red
    \newcommand{\Rnote}[1]{}
    \newcommand{\remove}[1]{}
    \newtheorem{theorem}{Theorem}
    \newtheorem{example}{Example}
    \newtheorem{assumption}{Assumption}
    \newtheorem{lemma}{Lemma}
    \newtheorem{corollary}{Corollary}
    \newtheorem{obs}{Observation}
    \newtheorem{definition}{Definition}
    \newtheorem{proposition}{Proposition}
    \newtheorem{remk}[theorem]{Remark}
    \def\FullBox{\hbox{\vrule width 8pt height 8pt depth 0pt}}
    \def\qed{\ifmmode\qquad\FullBox\else{\unskip\nobreak\hfil
    \penalty50\hskip1em\null\nobreak\hfil\FullBox
    \parfillskip=0pt\finalhyphendemerits=0\endgraf}\fi}
    \def\qedsketch{\ifmmode\Box\else{\unskip\nobreak\hfil
    \penalty50\hskip1em\null\nobreak\hfil$\Box$
    \parfillskip=0pt\finalhyphendemerits=0\endgraf}\fi}
    \newenvironment{proof}{\begin{trivlist} \item {\bf Proof:~~}}
      {\qed\end{trivlist}}
    \newenvironment{proofof}[1]{%\begin{trivlist} \item
      \medskip 

      {\textsc{Proof of #1.}~}}
      {\qed%\end{trivlist}
    }
    \newcommand{\beq}{\begin{equation}}
    \newcommand{\eeq}{\end{equation}}
    \newcommand{\be}{\begin{enumerate}}
    \newcommand{\ee}{\end{enumerate}}
    \newcommand{\bi}{\begin{itemize}}
    \newcommand{\ei}{\end{itemize}}
    \newcommand{\bd}{\begin{description}}
    \newcommand{\ed}{\end{description}}
    \newcommand{\bc}{\begin{center}}
    \newcommand{\ec}{\end{center}}
    \newcommand{\bthm}{\begin{theorem}}
    \newcommand{\ethm}{\end{theorem}}
    \newcommand{\bdefi}{\begin{definition}}
    \newcommand{\edefi}{\end{definition}}
    \newcommand{\bcor}{\begin{corollary}}
    \newcommand{\ecor}{\end{corollary}}
    \newcommand{\blem}{\begin{lemma}}
    \newcommand{\elem}{\end{lemma}}
    \newcommand{\bexa}{\begin{example}}
    \newcommand{\eexa}{\end{example}}
    \newcommand{\bprop}{\begin{proposition}}
    \newcommand{\eprop}{\end{proposition}}
    \newcommand{\ronote}[1]{\begin{framed}\noindent \textcolor{red}{{Ronen's note: #1}}\end{framed}} % boxed and red
    \newcommand{\ranote}[1]{\begin{framed}\noindent \textcolor{red}{{Rann's note: #1}}\end{framed}} % boxed and red
    \newcommand{\ranote}[1]{}
    \newcommand{\ronote}[1]{}
    \newcommand{\eqdef}{\mathbin{\stackrel{\rm def}{=}}}
    \def\real{\hbox{\rm\setbox1=\hbox{I}\copy1\kern-.45\wd1 R}}
    \def\neal{\hbox{\rm\setbox1=\hbox{I}\copy1\kern-.45\wd1 N}}
    \newcommand{\eps}{\varepsilon}
    \newcommand{\oa}{{\overline{a}}}
        \newcommand{\oo}{{\overline{\omega}}}
    \newcommand{\ang}[1]{\langle{#1}\rangle}
\newcommand{\ppi}{\bm{\pi}}
    \newcommand{\A}{{\mathcal A}}
    \newcommand{\R}{{\mathbb R}}
    \newcommand{\supp}{\mathrm{supp}}
\begin{document}

    \definecolor{myblue}{RGB}{80,80,160}
    \definecolor{mygreen}{RGB}{80,160,80}

\begin{titlepage}
    \title{Reaping the Informational Surplus in Bayesian Persuasion}
    %\thanks{Gradwohl gratefully acknowledges the support of United States-Israel Binational Science Foundation and National Science Foundation grant 2016734. Smorodinsky gratefully acknowledges United States-Israel Binational Science Foundation and National Science Foundation grant 2016734, the German-Israel Foundation grant I-1419-118.4/2017, the Ministry of Science and Technology grant 19400214, Technion VPR grants, and the Bernard M. Gordon Center for Systems Engineering at the Technion.}}
    
%    \author{Submission 177}

\author{Ronen Gradwohl\thanks{Department of Economics and Business Administration, Ariel University. Email: \texttt{roneng@ariel.ac.il}. Gradwohl gratefully acknowledges the support of National Science Foundation award number 1718670.} \and Niklas Hahn\thanks{Institute of Computer Science, Goethe University Frankfurt. Email: \texttt{nhahn@em.uni-frankfurt.de}. Hahn gratefully acknowledges the support of the German-Israeli Foundation grant I-1419-118.4/2017.} \and Martin Hoefer\thanks{Institute of Computer Science, Goethe University Frankfurt. Email: \texttt{mhoefer@em.uni-frankfurt.de}. Hoefer gratefully acknowledges the support of the German-Israeli Foundation grant I-1419-118.4/2017 and Deutsche Forschungsgemeinschaft grants DFG Ho 3831/5-1, 6-1, and 7-1.} \and Rann Smorodinsky\thanks{Faculty of Industrial Engineering and Management, The Technion -- Israel Institute of
    Technology. Email: \texttt{rann@ie.technion.ac.il}. Smorodinsky gratefully acknowledges United States-Israel Binational Science Foundation and National Science Foundation grant 2016734, the German-Israeli Foundation grant I-1419-118.4/2017, the Ministry of Science and Technology grant 19400214, Technion VPR grants, and the Bernard M. Gordon Center for Systems Engineering at the Technion.}}
%    \and
%    }
    \date{}
\maketitle
	\begin{abstract}
The Bayesian persuasion model studies communication between an informed sender and a receiver with a payoff-relevant action, emphasizing
the ability of a sender to extract maximal surplus from his informational advantage. In this paper
we study a setting with multiple senders, but in which the receiver interacts with only one sender of his choice: senders commit to signals and the receiver then chooses, at the {\em interim} stage, with which sender to interact. 
%Once the state is realized, the receiver observes  the realization of the chosen sender's signal, and reacts.
Our main result is that whenever senders are even slightly uncertain about each other's preferences, the {\em receiver} receives all the informational surplus in all equilibria of this game.
%This holds whenever there are two or more senders. 

%Our result implies two counterintuitive observations: 
%\begin{enumerate}
%\item From the senders' point of view, commitment power could be a double-edged sword. A single sender interacting with a receiver is always better off 
%with commitment power. We demonstrate that with more than one sender, however, that same commitment power could be harmful.
%\item From the receiver's point of view, the restriction to interacting with just a single sender, chosen at the interim stage, is actually a benefit. A receiver cannot be better off when this restriction is lifted. In fact, we exhibit scenarios in which interacting with all senders may render the receiver worse off.
%\end{enumerate}
	
	\end{abstract}
%    \maketitle

%    \noindent \textbf{JEL Classification:} C72, D82\\
%    \noindent \textbf{Keywords:}
    \thispagestyle{empty}

\end{titlepage}

    \renewcommand{\thefootnote}{\arabic{footnote}}
    \setcounter{footnote}{0}

% !TeX root = paperEC.tex

\section{Introduction}

The celebrated model of information design known as Bayesian persuasion \citep{KamenicaG11}  studies a setting where some agents, typically known as {\em senders}, receive private information. Other agents, known as {\em receivers}, must choose an action that affects their own payoff as well as the senders'. The senders' challenge, and the focus of the literature, is to decide how much of the information they should share with the receivers.  In particular, the Bayesian persuasion model captures settings where the informed senders can commit to a {\em signal}---a distribution over messages that depends on the
senders' information---prior to learning their private information.

Although senders take no action, the combination of access to private information with the ability to commit to a signal proves to be quite advantageous. However, when multiple senders are involved, this advantage may decline. Indeed, recent work discusses the deterioration in senders' payoffs as more senders compete, 
at the same time benefiting the single receiver. We discuss some of these results in Section~\ref{subsec:related}.

In this paper we provide a new model of competition among senders. In our model there are multiple senders and a state space for each of the senders and receiver. States are a priori unknown, but there may be almost arbitrary correlation between different players' states.
As always, senders commit to a signal prior to receiving any private information. In contrast with other models, however,
our receiver is restricted to receiving one message, from a sender of his choice.  

% The order of events is such that first, each sender commits to a signal, which may depend on his own state as well as that of the receiver. The receiver is restricted to interacting with
%just a single sender: In the interim stage, after all senders have committed to a signal but {\em before} these signals are realized, the receiver chooses one sender with whom to interact. Only then are states realized, after which the receiver observes the realization of his chosen sender's signal and takes an action.
%

%In this paper we provide a new model of competition among senders. In our model there are multiple senders,  but the receiver is restricted to interacting with 
%just a single one. In particular, in the interim stage, after all senders commit to a signal, the receiver chooses one sender with whom to interact. 
%He then only observes the realization of this chosen sender's signal before taking an action. 

%Our model stands in contrast with other work on competition in persuasion, in which the receiver is typically assumed to observe the realizations of all senders' signals.
From a descriptive perspective, there are many settings in which the receiver may lack the attention to receive input from all senders.
%interaction with only one chosen sender appears to be the more relevant scenario. 
For example, the receiver may be
a policymaker contemplating the implementation of some policy, and the senders may be experts who are knowledgeable about the policy's projected implications.
Consulting with all the experts may be costly; instead, the policymaker might just choose one with whom to consult.\footnote{During the recent COVID-19 pandemic many academics tried to provide input to their own governments, but they soon realized there is congestion in advice, and could not make themselves heard.}
%whose advice is most informative, and consult with him.
Alternatively, the receiver may be a judge or court of law and the senders prosecutors arguing a case \citep[as in the canonical example of][]{KamenicaG11}. 
The court may be limited in the number of cases it can hear, and so may have to choose with which prosecutors to interact. 
%This is essentially what happens
%in the U.S.\ Supreme Court, in which parties (senders) compete with one another to have their case heard by the Court \citep[see][]{USSupremeCourt}.

From a normative perspective,  when the receiver has flexibility in choosing how much advice to receive, he may prefer to limit the amount of advice.
Indeed, our main result is that in {\em all} equilibria of the game, the receiver learns {\em all} of his payoff-relevant information. That is, with as few as two competing senders, the receiver reaps all of the informational surplus. We obtain this strong dichotomy between the single- and multiple-sender cases in a very general model. We allow for an arbitrary information structure and arbitrary utility functions of the senders, as long as senders are not perfectly aligned in their objectives. 
Without this assumption---that is, if senders are perfectly aligned---then the multiple-sender case is nearly identical to the the single-sender case, 
and so the informational surplus is primarily enjoyed by the senders.

Our main result leads to two counterintuitive observations: First, from the receiver's point of view, the restriction to interacting with just a single sender is actually a benefit, and in fact should be self-imposed. 
%He always prefers to choose one sender in the interim than to observe all senders' signal realizations, and sometimes this preference is strict. 
Second, from the senders' point of view, commitment power could be a double-edged sword. A single sender interacting with a receiver is always better off 
with commitment power, but with more than one sender that same commitment power could be strictly harmful. Here, senders may be better off forgoing this power.

\subsection{Motivating Examples}
We now discuss these observations in greater detail.

%\noindent {\bf  The receiver's perspective:}  In the game we analyze the receiver is restricted to choosing a single sender,
%after which he obtains just that sender's signal realization. Consider an alternative game in which the receiver observes all senders' signal realizations.
%Our result implies that the receiver (weakly) prefers the former over the latter, as the former provides him with all of his payoff relevant information. In the following example we show that this preference may be strict:
\begin{example}[Inspired by an example of \citeauthor{KamenicaG11}, \citeyear{KamenicaG11}.] \rm
\label{ex:expost}
There is a government policy that may be beneficial or harmful and a policymaker who can either implement the policy (action $P$) or 
maintain the status quo (action $Q$). %The policymaker wants to implement a beneficial policy but not a harmful one. 
Additionally, there are two experts, and each is 
either biased or unbiased. The biased type always wants the policy implemented \citep[similarly to the prosecutor in the example of][]{KamenicaG11} whereas the unbiased type is aligned with the policymaker.
Each player obtains utility 1 when his preferred action is taken, and utility 0 otherwise.

The prior distribution over the information is as follows: The policy is beneficial with probability $0.5$ and each of the experts is unbiased with probability 
$\eps > 0$, all independently. Each expert learns his own type and whether or not the policy is beneficial.

Before we discuss competition between experts we consider the optimal signal for a single expert, absent competition. As a single, unbiased expert is fully aligned with the policymaker he will fully disclose the policy's benefit. On the other hand, a biased expert will always recommend implementation. As a result, the policymaker will 
implement the policy upon hearing a recommendation to implement, and will maintain the status quo otherwise. This yields an expected utility of 1 to the expert and 
$0.5+\eps/2$ to the policymaker.

We now return to the two-sender setting and consider two variants of the game. In the first, the policymaker chooses one expert with whom to interact (which is the
model discussed in the paper).
 In the second, the policymaker interacts with both.
\begin{itemize}
\item \textbf{Two senders and one signal:} This is the setting we will analyze in the rest of the paper. In this setting the policymaker chooses a single expert,
and observes the realization only of that expert's signal. We apply our main result, Theorem~\ref{thm:mixed}, to deduce that in this setting the policymaker will always take his correct action. This means that the experts' utility drops to $0.5+\eps/2$, while the policymaker's increases to 1. Note that this conclusion holds for any $\eps>0$. However, at $\eps=0$ we witness a striking discontinuity, as all experts can now choose the sender-optimal signal from the single-expert case, and will then fully enjoy their  
informational advantage.\footnote{That is, each expert always recommends implementation, and the policymaker obliges as he is indifferent. A slight
perturbation of the prior upwards, or of the expert's signals, renders this his strictly optimal action.}

\item\textbf{Two senders and two signals:} What if the policymaker observes both experts' signal realizations? Consider the case where all experts adopt the optimal course of action for the single-expert setting. As the policymaker observes all realizations of signals his optimal strategy is to maintain the status quo if one or more 
of the experts
recommends this, and otherwise implement the policy. For small $\eps$, the result is that the experts  once again (almost) fully extract the informational surplus, while the policymaker expects a utility close to $0.5$: With probability $(1-\eps)^2$, both experts are biased and recommend implementation, and the policymaker
obliges, leading to an expected utility of $0.5$ for the latter in this case. 

To see why this strategy profile constitutes an equilibrium in the game note that an unbiased expert always gets utility of $1$, and so cannot improve. A 
biased expert gets utility 1 when the other expert is also biased, since then both recommend their optimal action, implementation. 
The only situation in which a biased expert does not get his maximal utility is when the other expert is unbiased and the policy harmful. However, in this case 
he cannot change the policymaker's action from $Q$ to $P$, since the policymaker knows that the other expert's recommendation is aligned with his own 
preference.
\end{itemize}
\end{example}
Thus, in the example above the receiver strictly prefers to choose and interact with just one sender rather than to observe both senders' signal realizations.\\

A sender that can commit to his future course of action enjoys an obvious advantage over one who cannot, whenever he monopolizes information. To see this, observe that a sender with commitment power can always simulate one that has no such power. In the competitive setting, however, all the informational surplus is enjoyed by the receiver, and so commitment power is no longer an obvious benefit. Indeed, the following example demonstrates the possibility that senders engaged in competition may prefer to have no commitment power:

\begin{example}\label{ex:one_lobbyist}\rm
Consider a regulator who is contemplating the imposition of restrictions in the market for electronic cigarettes. The regulator can tap in to one of two leading experts
(lobbyists) on this market. Expert 1 supports the restrictions only if e-cigarettes increase lung cancer rates, modeled as a binary state: healthy ($H$) and unhealthy ($U$). Expert 2 supports the restrictions only when traditional cigarettes are more popular with women ($W$) than with men ($M$). Finally, the regulator is interested in imposing restrictions only if electronic cigarettes increase smoking rates among youth ($Y$ vs.\ $O$). Each of the players receives a utility of 1 if restrictions are imposed when they prefer them, and 0 otherwise.

There are 8 states of the world, and suppose the prior distribution over them is given by 
$$
\begin{array}{llll}
P(M,H,Y)= 0.18 & P(M,H,O) = 0.08 & P(M,U,Y)=0.12 & P(M,U,O)=0.12\\
P(W,H,Y) = 0.12 & P(W,H,O)=0.12 & P(W,U,Y)=0.08 & P(W,U,O)=0.18
\end{array}
$$
Each of the experts learns his own payoff-relevant state as well as the receiver's payoff-relevant state, but not that of the other expert. 
Our main result, Theorem~\ref{thm:mixed}, implies that
 whenever experts commit to signals and only one sender is chosen by the regulator, the regulator always takes his preferred action. Thus, 
 the unique equilibrium payoffs are $1$ to the regulator and $0.4$ to each expert.

In contrast, absent any commitment power the two experts could provide no information in equilibrium,\footnote{This follows from standard
cheap-talk arguments.} the regulator would take an arbitrary action as he is indifferent, and experts would enjoy a higher payoff of $0.5$.%
\footnote{The interested reader can verify that with a single sender the commitment power is advantageous. Assume only one expert, say the first one, and note that
 the induced prior is $P(H,Y)= 0.3, P(H,O)=0.2, P(U,Y)=0.2, P(U,O)=0.3$. With commitment power the expert expects a payoff of $0.9$, above and beyond the payoff of $0.5$ he expects absent such power.}
\end{example}

Thus, this example demonstrates that commitment power could be a double-edged sword for the senders: It is beneficial when there is a single sender, but
can be strictly harmful when there are multiple senders.

\subsection{Related Literature}
\label{subsec:related}
The
study of Bayesian persuasion was initiated by \citet{AumannM66} and came back into focus more recently
following the work of \citet{KamenicaG11}, leading to a
plethora of variants \citep[e.g.,][]{CelliCG20,ArieliB16,Ely17,ElyFK15,Au15,EmekFGLT12,AlonsoC16,Kolotilin15,GoldsteinL18,RabinovichJJX15,koessler2019long}. \citet{Kamenica19} provides a good overview of Bayesian persuasion and some of the extensions. 

Our research contributes to work on competing senders and the effect this competition has on the amount of information revealed to the receiver. This work includes
\citet{GentzkowK17Bayesian}, who study a model in which the decision maker receives the realized signals from every sender before making a decision. They show that increasing the number of senders cannot decrease the amount of information revealed to the receiver. Their work is supplemented by that of \citet{LiN18}, who show that the assumptions in the previous reference are critical and the result does not hold if any are violated.
We show a stark contrast between the model of \cite{GentzkowK17Bayesian} and one where the receiver 
is restricted to receiving the realized signal from a single sender. In the latter model the signals of competing senders in equilibrium are fully informative to the receiver, whereas in the former there are cases where equilibria are not fully informative.

\citet{AuK20} study the case of $n$ senders with a valuation drawn independently from a single known distribution competing for the patronage of a receiver. Only a single sender can be chosen by the receiver and only this sender gets a positive payoff. The receiver's payoff is determined by the chosen sender's valuation. The paper shows that an increase in the number of senders increases the amount of information revealed by each individual sender. In contrast to this, our model allows for almost arbitrary correlation between the senders and the receiver. Additionally, we allow multiple senders to profit from the receiver's decision. Our setting shows that already with two senders, senders are fully informative.

%\Rnote{Rann: I think there are 2 more distinctions from \cite{AuK20} to be made: First, the sender only care about being chosen but do not are abut the receiver's action.... am I right?. Second,  their results hold for a large number of senders while we ave a strict dichotomy between one and two or more.}
 
Finally, \cite{gentzkow2017competition}, compare the information revealed when senders compete to that of collusive senders in the same scenario. They identify a condition on the information structure which is necessary and sufficient for competition to increase the informativeness of signals in a broad class of settings. We note that this condition does not hold in our setup. They also show that their result does not hold if mixed equilibria are permitted, whereas our result holds for both
pure and mixed equilibria. We use a different approach and focus on commitment power, comparing the information revealed and payoff for a single sender to that of multiple competing senders.

\section{Model}\label{sec:model}
Some of the notation is adapted from \cite{gentzkow2017competition}.
There are $n$ senders, denoted $\{1,\ldots,n\}$, and a single receiver, denoted $R$. The state space is finite and is of the form $\Omega=\Omega_1\times\ldots\times\Omega_n\times\Omega_R$, with a typical element $\omega=(\omega_1,\ldots,\omega_n,\omega_R)$. 
The senders and receiver share a common prior $\mu_0$ over $\Omega$. 
%We allow for almost arbitrary correlation between the different players,
%except for one assumption -- that given any sender's state and the receiver's state, there is still some residual uncertainty about other senders' states:
%\begin{assumption}\label{assumption:not-perfectly-correlated}
%	For every pair of senders $i\neq j$, every $\omega_{iR}'\in \Omega_i\times\Omega_R$, and every $\omega_j'\in \Omega_j$, the
%	probability $P_{\omega\sim\mu_0}(\omega_j=\omega_j' \mid \omega_{iR}')>0$.
%\end{assumption}

The receiver is the only player with a payoff relevant action. Let $A$ denote this set of actions and
$u_R:\A\times \Omega_R  \mapsto \R$ and $u_i:\A\times \Omega_i  \mapsto \R$ the payoff functions of the receiver and sender $i$, respectively. 
Note that each player's payoff depends only on the corresponding entry in the state space.
This generalizes the standard model in which there is a common, payoff-relevant state $\Omega$ to all players, as the prior could
be such that $\omega_R=\omega_1=\ldots=\omega_n$ always. Our assumptions on utilities will limit this generality to some extent, however.

We make two assumptions. The first, made solely to simplify the exposition, states that each player has a unique optimal action in each state:

\begin{assumption}\label{assumption:unique-optimal}
	For every $j\in\{1,\ldots,n,R\}$ and $\omega_j\in\Omega_j$, the set $\arg\max_{a\in \A} u_j(a,\omega_j)$ is a singleton.
\end{assumption}

The second assumption is the main assumption underlying our theorem. In fact, without such an assumption the result will not hold,
as demonstrated by Example~\ref{ex:expost} with $\eps=0$. In words, given the receiver's and any sender's states,
there is still some residual uncertainty about other senders' states, and in particular on the possibility that their preferences are aligned with the receiver.
\begin{assumption}\label{assumption:possibly-aligned}
	For every pair of senders $i\neq j$ and every $(\omega_{j},\omega_R)\in \Omega_j\times\Omega_R$ that has positive probability under $\mu_0$, 
	there exists an $\omega_i\in\Omega_i$ with
	$P(\omega_i \mid (\omega_{j},\omega_R))>0$
	for which $\arg\max_{a\in \A} u_i(a,\omega_i) = \arg\max_{a\in \A} u_R(a,\omega_R)$.
\end{assumption}

This assumption warrants some discussion, as it is the main driving force behind our result. First, note that the assumption is ``weak" in that it is implied
by the following ``standard" assumptions: that the prior $\mu_0$ has full support, and that there are no undesirable actions---for 
each sender $i$ and each action $a$, there is some state $\omega_i$
for which $a= \arg\max_{a'\in \A} u_i(a',\omega_i)$. Second, note that 
Assumption~\ref{assumption:possibly-aligned} will be satisfied for any prior when senders' preferences are perturbed a bit so that each sender,
 independently but with arbitrarily small probability, is aligned with the receiver (as in Example~\ref{ex:expost}).

%The second assumption is that for any state $\omega_R$ of the receiver, there is some state $\omega_i$ in which the receiver and sender $i$
%have aligned preferences:
%\begin{assumption}\label{assumption:not-perfectly-correlated}
%	For every pair of senders $i\neq j$, every $\omega_{iR}'\in \Omega_i\times\Omega_R$, and every $\omega_j'\in \Omega_j$, the
%	probability $P_{\omega\sim\mu_0}(\omega_j=\omega_j' \mid \omega_{iR}')>0$.
%\end{assumption}

Next, in order to describe the game we require some preliminary notation. A signal for player $i$ is composed of an abstract message space, $M_i$, and a function $\pi_i: \Omega_{iR} \mapsto \Delta(M_i)$, where $\Omega_{iR} =\Omega_{i} \times\Omega_{R}$ and  $\Delta(\cdot)$ denotes the corresponding set of probability distributions. 
Let $\Pi_i$ denote the set of all signals of sender $i$.

The persuasion game we study proceeds as follows. 
First, each sender $i$ chooses a distribution $\ppi_i$ over $\Pi_i$. The receiver observes the vector $\left(\pi_1,\ldots,\pi_n\right)$ drawn from 
$\left(\ppi_1,\ldots,\ppi_n\right)$, after which he chooses one of the players, say $j$. The state is then realized, sender $j$ learns the state 
$\omega_{jR}\eqdef (\omega_j,\omega_R)$ and the receiver gets (only) the message $\pi_j(\omega_{jR})$ sent by the chosen player $j$. 
Finally, $R$ takes an action in $\A$ and payoffs of all players are realized.\footnote{We stress that the receiver chooses a sender with whom to interact {\em after} he observes the signal $\pi_i$ drawn from each sender $i$'s
distribution $\ppi_i$. Were he to make this choice after observing only the distributions $\ppi_i$, the model would be identical to one restricted to pure
strategies, since each $\ppi_i$ is equivalent to some $\pi_i$ in terms of the senders' and receiver's eventual utilities.}

%First, each sender chooses a signal $\pi_i\in\Pi_i$. (In Section~\ref{sec:mixed} we will allow
%senders to choose a distribution over signals. We defer this extension of the model to that section.) Next, $R$ chooses one of the players, say $j$, after which the state of nature is chosen. Sender $j$ learns the state $\omega_{jR}\eqdef (\omega_j,\omega_R)$ but the receiver only gets the message $\pi_j(\omega_{jR})$ sent by the chosen player $j$. Finally, $R$ takes an action in $\A$ and payoffs of all players are realized.
%

%Nature chooses a state according to the prior. Then, each of the senders learns the value of his state as well the receiver's state. Namely, he recieves an element in  $\omega_{iR}\eqdef (\omega_i,\omega_R)$.

%Each sender $i$ then chooses a signal $\pi_i\in\Pi_i$,  (to be defined shortly). The receiver observes the profile of signals $(\pi_1,\ldots,\pi_n)$, chooses a sender $j$, observes the realization of $\pi_j$, chooses an action $a\in \A$, and obtains utility $u_R(a,\omega_R)$, where $\omega$ is the realized state. Each sender $i$ obtains utility $u_i(a,\omega_i)$.

%A signal $\pi_i$ of sender $i$ is a random variable whose distribution may depend (only) on $(\omega_{iR})$. 

Without loss of generality we assume
that $M_i \subseteq \A$, and interpret the message realized by the signal as a recommended action to the receiver. A signal $\pi_i$
is {\em incentive compatible (IC)} if, upon any realization $a\in \A$ of $\pi_i$, the receiver's optimal action is $a$. We invoke the revelation principle and assume, without loss of generality, that senders are restricted to IC signals. Hereinafter we let $\Pi_i$ denote the set of all such IC signals.

Denote by $v_R(\pi)$ the expected utility of the receiver when he takes his optimal action following every realization of the signal $\pi$.
Denote by $v_i(\pi)$ the expected utility of sender $i$ when the receiver takes these actions. We write
$\pi\succeq_j\pi'$ when $v_j(\pi)\geq v_j(\pi')$, and $\pi\succ_j\pi'$ when $v_j(\pi)> v_j(\pi')$, for any $j\in\{1,\ldots,n,R\}$.

Note that for any sender $j$, a signal $\pi_j$ and realization $a\in \A$ of $\pi_j$ induce a posterior belief over $\Omega$. Denote this belief by $\ang{\pi_j  \mid a}$, and
by $\ang{\pi_j}$ the distribution of beliefs induced by realizations of $\pi_j$. Furthermore, denote by $\ang{\pi_j}_i$ the marginal
distribution of $\ang{\pi_j}$ over $\Omega_i$, and by $\ang{\pi_j}_{(i,R)}$ the marginal distribution of $\ang{\pi_j}$ over $\Omega_i\times\Omega_R$.

%Finally, we will be interested in pure Nash equilibria of the game, which are profiles of signals in which no sender
%has a profitable deviation. First, denote by $\mathcal{D}$ the set of optimal decision rules for the
%receiver:
%$$\mathcal{D} = \left\{D:\Pi_1\times\ldots\times\Pi_n\mapsto\Delta\left(\{1,\ldots,n\}\right):i\in\supp\left(D\left(\pi_1,\ldots,\pi_n\right)\right)\Rightarrow\pi_i\succeq_R \pi_j ~\forall j\in\{1,\ldots,n\} \right\}.$$
%In words, given a profile of players' signals, an optimal decision rule $D$ of the receiver is some distribution over senders whose signals he prefers over others'.

Finally, we will be interested in the Nash equilibria of this game. We assume that when the receiver is indifferent between multiple senders, he chooses one
of them uniformly
at random---this is akin to the standard assumption in Bayesian persuasion that the receiver breaks ties in favor of the sender.\footnote{Our result
holds also for weaker assumptions on the tie-breaking rule---see Appendix~\ref{sec:decision-rule-extension}.} To this end, 
fix the decision rule $D:\Pi_1\times\ldots\times\Pi_n\mapsto\Delta\left(\{1,\ldots,n\}\right)$ of the receiver as a uniformly-random choice of 
a sender from the set $\left\{j:\pi_j\succeq_R \pi_i ~\forall i\in\{1,\ldots,n\}\right\}$.

\begin{definition}
A {\em Nash equilibrium (NE)} is a profile  $\ppi=\left(\ppi_1,\ldots,\ppi_n\right)$ 
with the following property: There do not exist a sender $i$ and distribution $\ppi_i'$ over signals with
$$E\left[v_i(\ppi'_{D(\ppi')})\right] > E\left[v_i(\ppi_{D(\ppi)})\right],$$
where $\ppi'$ is the profile $\ppi$ but with $\ppi_i'$ replacing $\ppi_i$.
\end{definition}

% !TeX root = paperEC.tex

\section{Fully-Informative Equilibrium}\label{sec:equilibrium}
We are interested in fully-informative signals, in which the receiver obtains enough information to always take his optimal action. Formally,
\begin{definition}
	A signal $\pi_i$ is {\em fully informative} if for every $a\in\supp(\pi_i)$ and every $\omega_R\in\supp\left(\ang{\pi_i \mid a}_R\right)$ the recommended
	action $a=\arg\max_{b\in A} u_R(b,\omega_R)$. 
\end{definition}
Moreover, we are interested in fully-informative profiles of signals:
\begin{definition}	
	A profile $\ppi$ is {\em fully informative} if, with probability 1, the profile $\left(\pi_1,\ldots,\pi_n\right)$ drawn from $\ppi$ satisfies the following:
	For every sender $j$  that is chosen by the receiver with positive probability it holds that $\pi_j$ is fully informative.
\end{definition}

Our main result is then:
\begin{theorem}\label{thm:mixed}
	Every NE of the game is fully informative.
\end{theorem}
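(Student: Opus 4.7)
The plan is to argue by contradiction: assume $\ppi$ is a NE that is not fully informative, so with positive probability over realizations $\pi \sim \ppi$ some chosen sender's signal is not fully informative. Let $B$ denote this event, so $\Pr[B] > 0$. A preliminary observation is that on $B$ no sender has a fully informative signal, since any fully informative signal gives the receiver the maximum expected utility $V^* := E[\max_a u_R(a,\omega_R)]$, strictly greater than the expected utility of any non-fully-informative signal; otherwise the receiver would strictly prefer the fully informative sender, contradicting that a non-fully-informative signal is chosen on $B$. Correspondingly, on $B^c$ every chosen sender is fully informative, so each sender $i$'s payoff there equals $v_i^* := E[u_i(a^*_R(\omega_R),\omega_i)]$, where $a^*_R(\omega_R) := \arg\max_a u_R(a,\omega_R)$.

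My candidate deviation is a small perturbation. Fix a realization $\pi \in B$, let $T$ denote the tied set, and pick $i \in T$; consider the deviation to the ``labelled mixture'' signal $\pi'_i$ that outputs the message of $\pi_i$ with probability $1-\Delta$ and that of $\pi_i^*$ with probability $\Delta$, using disjoint message spaces so the two components remain distinguishable. Then $v_R(\pi'_i) = (1-\Delta) v_R(\pi_i) + \Delta V^*$, strictly above the NE maximum on $B$, so $i$ is uniquely chosen on $B$ under the deviation; on $B^c$, $\pi'_i$ remains non-fully-informative and $i$ is not in the chosen set, but every chosen sender is still fully informative and $i$'s payoff $v_i^*$ is unchanged. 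The resulting change in $i$'s expected payoff equals $\Pr[B]\bigl[(v_i(\pi_i) - \bar v_i^T) + \Delta(v_i^* - v_i(\pi_i))\bigr]$, where $\bar v_i^T$ averages $v_i(\pi_\ell)$ over $\ell \in T$; since $v_i(\pi_i) - \bar v_i^T = \frac{|T|-1}{|T|}(v_i(\pi_i) - \bar v_i^{-i})$ with $\bar v_i^{-i}$ the average over $\ell \in T\setminus\{i\}$, this is positive at small $\Delta>0$ whenever some $i \in T$ satisfies $v_i(\pi_i) > \bar v_i^{-i}$. The corner case $|T|=1$ is handled by having some sender $m \notin T$ deviate directly to $\pi_m^*$, which is profitable whenever $v_m^* > v_m(\pi_j)$ for the unique $j \in T$.

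The main obstacle is establishing these strict inequalities, where Assumption~\ref{assumption:possibly-aligned} is decisive. The idea is that $\pi_i$ is tailored to $\omega_i$ whereas $\pi_\ell$ for $\ell \neq i$ depends only on $(\omega_\ell, \omega_R)$; in particular, $i$'s aligned types, whose existence conditional on any $(\omega_\ell, \omega_R)$ is guaranteed by Assumption~\ref{assumption:possibly-aligned}, are served strictly better by $\pi_i$ than by $\pi_\ell$. The technical difficulty is that $\omega_i$ may be correlated with $\omega_\ell$ given $\omega_R$, so the joint distribution of $(\omega_i, a)$ induced by $\pi_\ell$ (with $a$ the receiver's action) need not cleanly separate by $i$'s alignment. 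My plan is to first argue that in any NE the aligned types of each sender must reveal $a^*_R(\omega_R)$ (otherwise adding this revelation to the sender's signal gives a strictly profitable deviation via Assumption~\ref{assumption:possibly-aligned}), then decompose each expected utility into aligned and biased contributions, and finally exploit the fact that the suboptimal receiver action on $B$ strictly hurts each sender's aligned types under $\pi_\ell$ compared to their tailored treatment under $\pi_i$, yielding $v_i(\pi_i) > \bar v_i^{-i}$ for at least one $i \in T$.
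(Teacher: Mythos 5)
There is a genuine gap, and it sits exactly at the point you flag as ``the main obstacle.'' Your deviation mixes $\pi_i$ with a fully informative signal $\pi_i^*$ \emph{uniformly} (with probability $\Delta$, regardless of sender $i$'s state). This makes the receiver strictly prefer $i$, but the induced payoff change for $i$ is $\Pr[B]\bigl[(v_i(\pi_i)-\bar v_i^T)+\Delta(v_i^*-v_i(\pi_i))\bigr]$, and \emph{both} terms can be nonpositive: $v_i^*-v_i(\pi_i)$ is typically negative (persuasion is valuable precisely because full revelation is worse for the sender --- in Example~\ref{ex:expost} the biased expert gets $1$ from his optimal signal but only $0.5+\eps/2$ from full revelation), and $v_i(\pi_i)>\bar v_i^{-i}$ fails in any symmetric profile where $v_i(\pi_i)=v_i(\pi_\ell)$ for all $\ell$ in the tied set. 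The same problem kills your $|T|=1$ corner case: you need $v_m^*>v_m(\pi_j)$, which simply need not hold. Your closing ``plan'' (that aligned types must already reveal $a_R^*(\omega_R)$ in any NE, hence $\pi_i$ treats $i$'s aligned types strictly better than $\pi_\ell$ does) is asserted rather than proved, and the profitable deviation it invokes is precisely the construction that is missing.

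The paper closes this gap with a two-step construction you do not have. First (Lemma~\ref{lem:simulate}), sender $i$ can \emph{simulate} any rival signal $\pi_j$: since $\pi_j$ depends only on $(\omega_j,\omega_R)$, sender $i$ can build $\pi_i'$ with $\ang{\pi_i'}_{(i,R)}=\ang{\pi_j}_{(i,R)}$, so that $v_i(\pi_i')=v_i(\pi_j)$ and $v_R(\pi_i')=v_R(\pi_j)$ exactly --- this is what guarantees the deviator a baseline equal to the best rival signal's value \emph{to him}, removing any dependence on unverifiable comparisons like $v_m^*>v_m(\pi_j)$. Second (Lemma~\ref{lem:i-improve}), the $\eps$-perturbation toward full revelation is applied \emph{only in states $\omega_i^{b^{\omega_R}}$ where $i$'s optimal action coincides with the receiver's} --- such states exist in every relevant posterior by Assumption~\ref{assumption:possibly-aligned} (Lemma~\ref{lem:assumption}). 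Because the extra revelation fires only when interests are aligned, the perturbed signal is strictly better for the receiver \emph{and} strictly better for sender $i$, which is the strict inequality your uniform mixture cannot deliver. (Separately, your accounting of the deviation's effect across the event $B$ is loose for mixed profiles --- on parts of $B$ where $i$ is not in the tied set, a small $\Delta$ does not make him chosen --- which is why the paper needs Lemma~\ref{lem:not-chosen} and the three-case analysis around the threshold $\tau$; but the alignment-conditional improvement is the essential missing idea.)
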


%\begin{theorem}\label{thm:mixed}
%	Let $\ppi$ be a NE of the game. Then with probability 1 the profile $\left(\pi_1,\ldots,\pi_n\right)$ drawn from $\ppi$ satisfies the following:
%	For every sender $j$  that is chosen by the receiver with positive probability it holds that $\pi_j$ is fully informative.
%\end{theorem}

%\begin{theorem}\label{thm:pure}
%	Let $\left(\pi_1,\ldots,\pi_n\right)$ be a PNE of the game. Then for every $j$ that is chosen with positive probability by the receiver
%	it holds that $\pi_j$ is fully informative.
%\end{theorem}

The main idea underlying the proof of Theorem~\ref{thm:mixed} is best illustrated by the simpler case in which $\ppi$ is a pure profile,
 say $\left(\pi_1,\ldots,\pi_n\right)$. Suppose that in this profile, player $j$ is always
chosen by the receiver, but that $\pi_j$ is not fully informative. In this case we can construct a profitable deviation for any other player $i$.
First, in Lemma~\ref{lem:simulate} we show that this other player can {\em simulate} the signal of player $j$, 
by playing a signal $\pi_i'$ that yields the same marginal distribution over
his own and the receiver's states as $\pi_j$. Because of this equivalence in marginal distributions, both the receiver and player $i$ are indifferent between 
$\pi_j$ and $\pi'_i$. Then, in Lemma~\ref{lem:i-improve} we show that player $i$ can modify $\pi_i'$ to yield a signal $\pi_i''$ that both he and the receiver
strictly prefer to $\pi_i'$ and hence to $\pi_j$. The way he does this is by making $\pi_i'$ a touch more informative (to improve the receiver's utility),
but only when his own preferences are aligned with the receiver's (to also improve his own utility). This is feasible for $i$, because he knows his own state
while $j$ does not.
This deviation thus yields a profile in which the receiver always chooses $\pi_i''$, which player $i$ strictly prefers to the current profile in which $\pi_j$ is chosen.

%The argument above works for pure Nash equilibria, but can be extended to incorporate also mixed strategies by the senders. 
\subsection{Preliminary Lemmas}
We begin with some lemmas that will be used in the proof of Theorem~\ref{thm:mixed}.

\begin{lemma}\label{lem:simulate}
	For every pair of senders $i\neq j$ and every signal $\pi_j\in\Pi_j$ there exists a signal $\pi_i\in\Pi_i$ for which $\ang{\pi_i}_{(i,R)}=\ang{\pi_j}_{(i,R)}$.
\end{lemma}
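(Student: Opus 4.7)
The plan is to have sender $i$ reconstruct the message distribution of $\pi_j$ by first simulating a draw of $\omega_j$ from its conditional distribution given $(\omega_i,\omega_R)$, and then emitting exactly the message $j$ would have emitted. Formally, I would take $M_i=M_j$ and define, for each $(\omega_i,\omega_R)\in\Omega_{iR}$ with $P(\omega_i,\omega_R)>0$,
\[
\pi_i(m\mid \omega_i,\omega_R) \;=\; \sum_{\omega_j\in\Omega_j} P(\omega_j\mid \omega_i,\omega_R)\,\pi_j(m\mid \omega_j,\omega_R),
\]
with an arbitrary extension on zero-probability $(\omega_i,\omega_R)$.

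The first step is to check that the joint distribution of $(m,\omega_i,\omega_R)$ is identical under $\pi_i$ and under $\pi_j$. Marginalizing the prior over the states of players other than $\{i,j,R\}$, I would verify that both laws equal
\[
\sum_{\omega_j\in\Omega_j} \mu_0(\omega_i,\omega_j,\omega_R)\,\pi_j(m\mid \omega_j,\omega_R),
\]
where $\mu_0(\omega_i,\omega_j,\omega_R)$ denotes the corresponding marginal of the prior. For $\pi_j$ this is a direct marginalization; for $\pi_i$ it follows from the identity $P(\omega_j\mid\omega_i,\omega_R)\cdot\mu_0(\omega_i,\omega_R)=\mu_0(\omega_i,\omega_j,\omega_R)$. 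Since the distribution of posteriors on $\Omega_i\times\Omega_R$ is pinned down by this joint law of $(m,\omega_i,\omega_R)$, the conclusion $\ang{\pi_i}_{(i,R)}=\ang{\pi_j}_{(i,R)}$ follows.

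The second step is to confirm $\pi_i\in\Pi_i$, i.e.\ that $\pi_i$ is IC. Summing the joint-law identity over $\omega_i$ shows that $(m,\omega_R)$ has the same distribution under $\pi_i$ as under $\pi_j$, so the receiver's posterior over $\Omega_R$ given any realization $m$ is identical in the two cases. Since $\pi_j$ is IC---$m$ maximizes the receiver's expected utility under that posterior---so is $\pi_i$.

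I do not expect a substantive obstacle: the construction is the natural one, namely having the better-informed (with respect to $\omega_i$) player $i$ reproduce player $j$'s randomness over $\omega_j$ by direct conditional sampling. The only point requiring care is distinguishing the distribution of posteriors on the full $\Omega$ from its marginal on $\Omega_i\times\Omega_R$, but the argument needs only the latter, which is entirely determined by the joint law of $(m,\omega_i,\omega_R)$ computed above.
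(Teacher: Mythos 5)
Your construction is exactly the paper's: you set $\pi_i(m\mid\omega_{iR})$ equal to the conditional probability $P(\pi_j=m\mid\omega_{iR})$ (you write it via the law of total probability over $\omega_j$, the paper via Bayes' rule from the posterior), and the verification that the induced posteriors on $\Omega_i\times\Omega_R$ coincide is the same computation. The IC check you include is also the paper's, stated there as a remark immediately after the lemma; the proposal is correct.
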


In words, sender $i$ can simulate any signal of $j$ so that the posterior distribution over the payoff relevant states of $i$ and $R$ equals that induced by $j$'s signal.

\begin{proof}
	The proof is by construction. Fix a pair of senders $i\neq j$ and a signal $\pi_j\in\Pi_j$. For each action $a\in\supp(\pi_j)$, recall that $\pi_j$ induces
	a distribution $\ang{\pi_j \mid a}_{(i,R)}$ over states $\Omega_i\times\Omega_R$, namely a probability $P(\omega_{iR} \mid \pi_j=a)$ for each state
	$\omega_{iR}$.
	% implies a distribution
	%over $\Omega_i\times\Omega_R$, namely $\ang{\pi_j \mid a}$. By Bayes' Rule one can compute
	%for each $a'$
	Let the signal $\pi_i$ be the distribution over $A$ generated as follows: In each state $\omega_{iR}$, signal $\pi_i$ takes each value $a\in \A$ with probability
	$P(\pi_j=a \mid \omega_{iR})$, where
	$$P(\pi_j=a \mid \omega_{iR}) = \frac{P(\omega_{iR} \mid \pi_j=a) \cdot P(\pi_j=a)}{P(\omega_{iR})}.$$
	In this construction,
	\begin{align*}
	P(\omega_{iR} \mid \pi_i=a)&=\frac{P(\pi_i=a \mid \omega_{iR})\cdot P(\omega_{iR})}{\sum_{\omega_{iR}'} P(\pi_i=a \mid \omega_{iR}')\cdot P(\omega_{iR}')}\\
	&=\frac{P(\pi_j=a \mid \omega_{iR})\cdot P(\omega_{iR})}{\sum_{\omega_{iR}'} P(\pi_j=a \mid \omega_{iR}')\cdot P(\omega_{iR}')}\\
	&=P(\omega_{iR} \mid \pi_j=a)
	\end{align*}
	for every $\omega_{iR}$ and $a$. Thus, $\ang{\pi_i}_{(i,R)}=\ang{\pi_j}_{(i,R)}$.
\end{proof}

Note that in Lemma~\ref{lem:simulate}, the marginal distributions over $\omega_R$, given any recommended action $a$, is equal for $\pi_j$ and $\pi_i$.
Therefore, if $\pi_j$ is IC, then so is $\pi_i$.

%for every $\omega_R\in\Omega_R^{\overline{a}}$ there exists $\omega_i^{b^{\omega_R}}\in\supp(\ang{\pi_i'\mid \overline{a}}_i)$ for which
%	$b^{\omega_R}=\arg\max_{b\in A} u_i(\omega_i^{b^{\omega_R}},b)$
	
\begin{lemma}\label{lem:assumption}
	For every pair of senders $i\neq j$, signal $\pi_j\in\Pi_j$, realization $a\in\supp(\pi_j)$, and $\omega_R\in\supp(\ang{\pi_j\mid a}_R)$
	 there exists $\omega_i\in\supp(\ang{\pi_j\mid a}_i)$ for which
	$\arg\max_{b\in A} u_i(b,\omega_i)=\arg\max_{b\in A} u_R(b,\omega_R)$.
\end{lemma}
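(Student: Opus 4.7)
The plan is to derive the conclusion by applying Assumption~\ref{assumption:possibly-aligned} to a carefully chosen pair $(\omega_j,\omega_R)$, and then translating the unconditional positive-probability statement it delivers into one conditional on the event $\{\pi_j=a\}$. The main technical point is that the signal $\pi_j$ is, by definition, a function of $(\omega_j,\omega_R)$ alone, so conditioning on its realization leaves the conditional distribution of $\omega_i$ given $(\omega_j,\omega_R)$ unchanged.

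First I would fix an arbitrary $\omega_R\in\supp(\ang{\pi_j\mid a}_R)$. Since this state lies in the marginal support, there exists some $\omega_j\in\Omega_j$ with $P(\omega_j,\omega_R\mid \pi_j=a)>0$, and by Bayes' rule this forces $P(\omega_j,\omega_R)>0$ under the prior $\mu_0$. Now I can invoke Assumption~\ref{assumption:possibly-aligned} on the pair $(\omega_j,\omega_R)$ to obtain some $\omega_i\in\Omega_i$ with
\[
P(\omega_i\mid \omega_j,\omega_R)>0 \quad\text{and}\quad \arg\max_{b\in A} u_i(b,\omega_i)=\arg\max_{b\in A} u_R(b,\omega_R).
\]

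The remaining task is to verify that this $\omega_i$ actually lies in $\supp(\ang{\pi_j\mid a}_i)$. The key observation is that $\pi_j$ is a function only of $(\omega_j,\omega_R)$, hence, conditional on $(\omega_j,\omega_R)$, the message $\pi_j$ is independent of $\omega_i$. Consequently,
\[
P(\omega_i\mid \pi_j=a)=\sum_{(\omega_j',\omega_R')}P(\omega_i\mid \omega_j',\omega_R')\cdot P(\omega_j',\omega_R'\mid \pi_j=a),
\]
and the single summand corresponding to our chosen $(\omega_j,\omega_R)$ equals $P(\omega_i\mid \omega_j,\omega_R)\cdot P(\omega_j,\omega_R\mid \pi_j=a)$, which is strictly positive by construction. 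Therefore $P(\omega_i\mid \pi_j=a)>0$, i.e., $\omega_i\in\supp(\ang{\pi_j\mid a}_i)$, and the alignment of argmaxes gives the desired conclusion.

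The main obstacle is purely bookkeeping: one has to be careful to extract the right $\omega_j$ from the marginal-support hypothesis and then keep track of how the conditional independence of $\omega_i$ and $\pi_j$ given $(\omega_j,\omega_R)$ propagates positive probability mass from the unconditional statement in Assumption~\ref{assumption:possibly-aligned} to the posterior conditional on $\pi_j=a$. No additional assumption beyond those already in force is needed.
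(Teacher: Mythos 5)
Your proof is correct and follows essentially the same route as the paper: pick $\omega_j$ so that $(\omega_j,\omega_R)$ has positive posterior probability, apply Assumption~\ref{assumption:possibly-aligned} to that pair, and use the conditional independence of $\pi_j$ and $\omega_i$ given $\omega_{jR}$ to lower-bound $P(\omega_i\mid\pi_j=a)$ by a single strictly positive summand. The paper reaches the same one-term bound by running Bayes' rule twice through $P(\pi_j=a\mid\omega_i)$, whereas you decompose $P(\omega_i\mid\pi_j=a)$ directly by the law of total probability; the two computations are algebraically identical, and your handling of the initial step (extracting $\omega_j$ from the marginal-support hypothesis) is if anything slightly more explicit than the paper's.
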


\begin{proof} Fix some realization $a\in\supp(\pi_j)$ and state $\omega_{jR}\in\supp(\ang{\pi_j\mid a}_{(j,R)})$.
	By Assumption~\ref{assumption:possibly-aligned},
	for every pair of senders $i\neq j$ and every $\omega_{jR}\in \Omega_j\times\Omega_R$, there exists an $\omega_i\in\Omega_i$ with
	$P(\omega_i \mid \omega_{jR})>0$
	for which $\arg\max_{a\in \A} u_i(a,\omega_i) = \arg\max_{a\in \A} u_R(a,\omega_R)$. It remains to show that this $\omega_i\in\supp(\ang{\pi_j\mid a}_i)$:
%	By the full support assumption,
%	%Assumption~\ref{assumption:not-perfectly-correlated}, 
%	for every $\omega_{iR}'\in \Omega_i\times\Omega_R$, and every $\omega_j'\in \Omega_j$, the
%	probability $P_{\omega\sim\mu_0}(\omega_j=\omega_j' \mid \omega_{iR}')>0$.
	\begin{align*}
	P(\omega_i \mid \pi_j=a) &= \frac{P(\omega_i)\cdot P(\pi_j=a \mid \omega_i) }{P(\pi_j=a)}\\
	&=\frac{P(\omega_i)\cdot\sum_{\omega_{jR}'} P(\pi_j=a \mid \omega_i,\omega_{jR}')\cdot P(\omega_{jR}' \mid \omega_i)}{P(\pi_j=a)}\\
	&=\frac{P(\omega_i)\cdot\sum_{\omega_{jR}'} P(\pi_j=a \mid \omega_{jR}')\cdot \frac{P(\omega_i \mid \omega_{jR}')\cdot P(\omega_{jR}')}{P(\omega_i)}}{P(\pi_j=a)}\\
	&\geq\frac{P(\omega_i)\cdot P(\pi_j=a \mid \omega_{jR})\cdot \frac{P(\omega_i \mid \omega_{jR})\cdot P(\omega_{jR})}{P(\omega_i)}}{P(\pi_j=a)}\\
	&>0,
	\end{align*}
	where the strict inequality follows from Assumption~\ref{assumption:possibly-aligned} and the fact that all other terms are also
	strictly positive.
	\end{proof}

%\begin{lemma}\label{lem:assumption}
%	For every pair of senders $i\neq j$, signal $\pi_i\in\Pi_i$, and realization $a\in\supp(\pi_i)$
%	it holds that $\supp\left(\ang{\pi_i \mid a}_j\right)=\Omega_j$.
%\end{lemma}
%
%\begin{proof}
%	By the full support assumption,
%	%Assumption~\ref{assumption:not-perfectly-correlated}, 
%	for every $\omega_{iR}'\in \Omega_i\times\Omega_R$, and every $\omega_j'\in \Omega_j$, the
%	probability $P_{\omega\sim\mu_0}(\omega_j=\omega_j' \mid \omega_{iR}')>0$.
%	Fix some $\omega_j'$ and $a\in\supp(\pi_i)$. Then
%	\begin{align*}
%	P(\omega_j' \mid \pi_i=a) &= \frac{P(\omega_j')\cdot P(\pi_i=a \mid \omega_j') }{P(\pi_i=a)}\\
%	&=\frac{P(\omega_j')\cdot\sum_{\omega_{iR}} P(\pi_i=a \mid \omega_j',\omega_{iR})\cdot P(\omega_{iR} \mid \omega_j')}{P(\pi_i=a)}\\
%	&=\frac{P(\omega_j')\cdot\sum_{\omega_{iR}} P(\pi_i=a \mid \omega_{iR})\cdot \frac{P(\omega_j' \mid \omega_{iR})\cdot P(\omega_{iR})}{P(\omega_j')}}{P(\pi_i=a)}\\
%	&>0,
%	\end{align*}
%	since all the terms are strictly positive.
%\end{proof}

\begin{lemma}\label{lem:i-improve}
	Fix a pair of senders $i\neq j$ and a signal $\pi_j\in\Pi_j$ that is {\em not} fully informative.
	%$$\exists~a\in\supp(\pi_j),\omega_R\in\supp\left(\ang{\pi_j \mid a}_R\right), \mbox{and } b\in A\setminus\{a\} \mbox{ s.t. } u_R(b,\omega_R)>u_R(a,\omega_R),$$
	Then there exists a signal $\pi_i\in\Pi_i$ for which both $\pi_i\succ_R\pi_j$ and $\pi_i\succ_i\pi_j$.
\end{lemma}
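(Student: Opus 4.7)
The plan is to start from the simulation signal $\pi_i'$ supplied by Lemma~\ref{lem:simulate}, and then make it ``a touch more informative'' at a single state where sender $i$'s preferences happen to coincide with the receiver's, as foreshadowed in the discussion preceding the lemma.

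First, apply Lemma~\ref{lem:simulate} to obtain $\pi_i' \in \Pi_i$ with $\ang{\pi_i'}_{(i,R)} = \ang{\pi_j}_{(i,R)}$. Matching marginals over $\omega_R$ at each realized message (together with IC of $\pi_j$) imply that the receiver's optimal response under $\pi_i'$ agrees with that under $\pi_j$, giving $\pi_i' \sim_R \pi_j$; matching marginals over $\omega_i$ give $\pi_i' \sim_i \pi_j$. It therefore suffices to produce $\pi_i \in \Pi_i$ with $\pi_i \succ_R \pi_i'$ and $\pi_i \succ_i \pi_i'$.

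Next, I would locate an aligned state and perform a local refinement. Since $\pi_j$ is not fully informative, some recommendation $a \in \supp(\pi_j)$ admits $\omega_R^* \in \supp(\ang{\pi_j \mid a}_R)$ with $a^* := \arg\max_b u_R(b, \omega_R^*) \neq a$. Pick $\omega_j^*$ with $(\omega_j^*, \omega_R^*) \in \supp(\ang{\pi_j \mid a}_{(j,R)})$; Assumption~\ref{assumption:possibly-aligned} then yields $\omega_i^*$ with $P(\omega_i^* \mid \omega_j^*, \omega_R^*) > 0$ and $\arg\max_b u_i(b, \omega_i^*) = a^*$. A Bayes computation in the spirit of Lemma~\ref{lem:assumption} shows that $(\omega_i^*, \omega_R^*)$ has positive probability conditional on $\pi_j = a$, and hence on $\pi_i' = a$ by Step~1. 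Now, for a small $\eta > 0$, define $\pi_i$ to coincide with $\pi_i'$ except at state $(\omega_i^*, \omega_R^*)$, where each occurrence of the message $a$ under $\pi_i'$ is replaced, independently with probability $\eta$, by a fresh message $\tilde a$. Because $\tilde a$ is emitted only from $(\omega_i^*, \omega_R^*)$, the receiver's unique best response to $\tilde a$ is $a^*$; by Assumption~\ref{assumption:unique-optimal} and alignment, $u_R(a^*, \omega_R^*) > u_R(a, \omega_R^*)$ and $u_i(a^*, \omega_i^*) > u_i(a, \omega_i^*)$, so on the event $\{\pi_i = \tilde a\}$ both the receiver and sender $i$ strictly gain relative to the corresponding event under $\pi_i'$.

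The main obstacle is verifying that $\pi_i$ remains IC at the original message $a$, i.e., that the receiver still best-responds with $a$ after the small mass transfer from $\omega_R^*$. Because $a$ is strictly suboptimal for the receiver at $\omega_R^*$, for any alternative $b$ with strict IC slack under $\pi_i'$ the inequality $E[u_R(a,\omega_R) \mid \pi_i = a] \geq E[u_R(b,\omega_R) \mid \pi_i = a]$ is preserved for small enough $\eta$ by continuity. The delicate case is an exact tie $E[u_R(a,\omega_R) \mid \pi_i' = a] = E[u_R(b,\omega_R) \mid \pi_i' = a]$, where the post-perturbation sign is governed by $u_R(a,\omega_R^*) - u_R(b,\omega_R^*)$ and may go either way. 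I would resolve this by appealing to the standard sender-favorable tie-breaking convention and, if necessary, by an arbitrarily small auxiliary split at another aligned state chosen to keep $a$ on the receiver's $\arg\max$ set at message $a$ without disturbing the strict gains at $\tilde a$. Granted IC, the strict improvements on $\{\pi_i = \tilde a\}$ then deliver $\pi_i \succ_R \pi_i' \sim_R \pi_j$ and $\pi_i \succ_i \pi_i' \sim_i \pi_j$, as required.
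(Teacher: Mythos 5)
Your overall strategy---simulate $\pi_j$ via Lemma~\ref{lem:simulate} and then shift a small amount of probability toward the receiver's optimal action, but only at states where sender $i$ is aligned---is exactly the idea behind the paper's proof. The execution, however, has a genuine gap, and it is precisely at the point you flag as ``the main obstacle.'' By splitting off mass at a \emph{single} state $(\omega_i^*,\omega_R^*)$, you change the receiver's posterior over $\Omega_R$ conditional on the residual message $a$: mass is removed from $\omega_R^*$ only, and, as you correctly compute, in the case of an exact tie $E[u_R(a,\omega_R)\mid \pi_i'=a]=E[u_R(b,\omega_R)\mid \pi_i'=a]$ with $u_R(a,\omega_R^*)>u_R(b,\omega_R^*)$, the perturbation makes $a$ \emph{strictly} suboptimal at message $a$. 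Neither of your proposed fixes closes this. A sender-favorable tie-breaking convention is irrelevant once $a$ is strictly dominated at the posterior (there is no longer a tie to break), and the ``arbitrarily small auxiliary split at another aligned state'' is not constructed: you do not say which state, why an aligned one with the needed sign exists, or why a single auxiliary split can restore optimality of $a$ against \emph{all} alternatives $b$ simultaneously. Since the entire payoff comparison (both $\succ_R$ and $\succ_i$) presupposes that the receiver still plays $a$ on the residual event $\{\pi_i=a\}$, the argument does not go through as written.

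The paper's construction avoids this issue structurally rather than patching it. Using Lemma~\ref{lem:assumption}, it finds an aligned state $\omega_i^{b^{\omega_R}}$ for \emph{every} $\omega_R\in\supp\bigl(\ang{\pi_i'\mid\oa}_R\bigr)$, and splits off mass from each pair $\bigl(\omega_i^{b^{\omega_R}},\omega_R\bigr)$ with probability $\eps^{\omega_R}$ calibrated so that each $\omega_R$ loses exactly the fraction $\eps$ of its conditional mass, i.e.\ $\eps^{\omega_R}\cdot P\bigl(\omega_i^{b^{\omega_R}},\omega_R\mid\pi_i'=\oa\bigr)=\eps\cdot P(\omega_R\mid\pi_i'=\oa)$. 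The residual posterior over $\Omega_R$ at message $\oa$ is then \emph{identical} to the original one, so incentive compatibility at $\oa$ is preserved automatically (and the new messages are fully informative, hence trivially IC); strictness of both improvements follows because the pool at $\oa$ contains at least one $\oo$ with $\oa\neq\arg\max_b u_R(b,\oo)$ and the reassigned action there is the unique optimum for both the receiver and the aligned type of sender $i$. I would recommend you replace the single-state perturbation with this proportional, all-states version; that is the missing idea.
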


\begin{proof}
	Since $\pi_j$ is not fully informative, there exist an action $\oa\in\supp(\pi_j)$ and a state $\oo\in\supp\left(\ang{\pi_j \mid \oa}_R\right)$
	such that $\oa\neq \arg\max_{b\in A} u_R(b,\oo)$.

	%Let $\overline{b}=\arg\max_{c\in A} u_R(c,\overline{\omega}_R)$.
	
	By Lemma~\ref{lem:simulate} there exists a signal
	$\pi_i'$ for which $\ang{\pi_i'}_{(i,R)}=\ang{\pi_j}_{(i,R)}$. This implies that both $v_i(\pi_i')=v_i(\pi_j)$ and $v_R(\pi_i')=v_R(\pi_j)$.
%	By Lemma~\ref{lem:assumption}, $\supp\left(\ang{\pi_j \mid \overline{a}}_i\right)=\Omega_i$,
%	and so also $\supp\left(\ang{\pi_i' \mid \overline{a}}_i\right)=\Omega_i$.

	Consider the set $\Omega_R^{\overline{a}}\eqdef\{\omega_R\in\supp\left(\ang{\pi_i'\mid \overline{a}}_R\right)\}$. For every $\omega_R\in\Omega_R^{\overline{a}}$
	let $b^{\omega_R} = \arg\max_{b\in A}u_R(b,\omega_R)$ be the optimal action for the receiver in state $\omega_R$.
	By Lemma~\ref{lem:assumption},
	for every $\omega_R\in\Omega_R^{\overline{a}}$ there exists $\omega_i^{b^{\omega_R}}\in\supp(\ang{\pi_i'\mid \overline{a}}_i)$ for which
	$b^{\omega_R}=\arg\max_{b\in A} u_i(\omega_i^{b^{\omega_R}},b)$.
	
	We now construct the signal $\pi_i$, as follows:
	\begin{enumerate}
		\item Given any state, generate the recommendation $a'$ according to $\pi_i'$.
		\item If the realized recommendation is $a' \neq \overline{a}$, signal $\pi_i$ is realized as $a'$.
		\item If the realized recommendation is $a'=\overline{a}$ and the state is $\left(\omega_i^{b^{\omega_R}},\omega_R\right)$
		for some $\omega_R\in\Omega_R^{\overline{a}}$, then:
		\begin{enumerate}
			\item with probability $\eps^{\omega_R}$ (to be determined below) signal $\pi_i$ is realized as
			$b^{\omega_R}$, and
			\item with probability $1-\eps^{\omega_R}$ the signal is realized as $a'$.
		\end{enumerate}
		\item Otherwise (the realized recommendation is $a'=\overline{a}$ but the state is {\em not} $\left(\omega_i^{b^{\omega_R}},\omega_R\right)$), 
		signal $\pi_i$ is realized as $a'$.
	\end{enumerate}
	
	We will set $\eps^{\omega_R}$ in 3(a) above so that the following is achieved: if the realized recommendation of $\pi_i'$ is $\overline{a}$, then with
	some small probability, $\pi_i$ is fully informative, and is realized as $b^{\omega_R}=\arg\max_{a\in\A} u_R(a,\omega_R)$. 
	Furthermore, in the case when $\pi_i$ is fully informative, the state $\omega_R$ is revealed (through the
	recommended action $b^{\omega_R}$) only when the state of player $i$ is $\omega_i^{b^{\omega_R}}$. That is, the signal reveals the optimal action
	for the receiver precisely in those cases when that same action is also optimal for sender $i$.
	
	To achieve this, set $\Lambda^a=\{\omega_R:a=\arg\max_{b}u_R(b,\omega_R)\}$.
	We will set $\eps^{\omega_R}$ so that, conditional on realizing 3(a) above, each recommendation $a$ is realized
	with probability proportional to $P(\omega_R\in\Lambda^a  \mid \pi_i'=\overline{a})$. Thus, this is akin to playing a fully informative signal with some
	small probability.
	Formally, set
	$$\eps^{\omega_R}=\frac{\eps\cdot P(\omega_R \mid \pi_i'=\overline{a})}{P\left(\left(\omega_i^{b^{\omega_R}},\omega_R\right) \mid \pi_i'=\overline{a}\right)}$$
	for $\eps>0$. Set $\eps$ small enough so that $\eps^{\omega_R}\leq 1$ for all $\omega_R\in\Omega_R^{\overline{a}}$.
%	\Rnote{RANN's note: We may carve out this part and have a separate lemma saying that if i has two IC signals and he chooses randomly between them for each realization $\omega_{iR}$ then the resulting signal is also IC. If R prefers one of the two over the other then the new one is strictly preferred to the worst one.}
	
	Hence, if the realized recommendation of $\pi_i'$ is $a'=\oa$ and the state is some $\omega_R\in\Omega_R^{\oa}$,
	then the signal $\pi_i$ is realized as $a$ with the following probability. For $a\neq \oa$,
	\begin{align*}
	P\left(\pi_i=a \mid \pi_i'=\oa\right) &= \sum_{\omega_R\in \Lambda^a\cap\Omega_R^{\oa}} \eps^{\omega_R}\cdot P\left( \omega_i^{a},\omega_R \mid \pi_i'=\oa\right)\\
	&=\sum_{\omega_R\in \Lambda^a\cap\Omega_R^{\oa}} \eps\cdot P(\omega_R \mid \pi_i'=\oa)\\
	&=\eps\cdot P\left(\omega_R\in\Lambda^a  \mid \pi_i'=\oa\right).
	\end{align*}
Because the signal $\pi_i$ thus constructed is sometimes realized as $\pi_i'$ and sometimes is fully informative, $\pi_i\succ_R \pi_i'$. Furthermore,
since the fully informative recommendation is realized precisely when sender $i$ and the receiver's preferences are aligned, $\pi_i\succ_i \pi_i'$.
Combining this with the facts above that $v_i(\pi_i')=v_i(\pi_j)$ and $v_R(\pi_i')=v_R(\pi_j)$ yields the claimed result.
\end{proof}

Finally, the following lemma essentially extends Lemma~\ref{lem:i-improve} to mixed strategies by the senders.
It states that if in some profile a sender plays signals that are almost never chosen, then he has a profitable deviation. 
\begin{lemma}\label{lem:not-chosen}
Fix a NE $\ppi$ that is not fully-informative. Then no sender $i$'s strategy $\ppi_i$ has positive measure on any set of signals that is chosen by the receiver
with probability 0.
\end{lemma}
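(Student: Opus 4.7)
The plan is to argue by contradiction: I assume $\ppi$ is a Nash equilibrium that is not fully informative yet some sender $i$ assigns positive mass $\alpha = \ppi_i(S_i) > 0$ to a set $S_i$ of signals each chosen by the receiver with probability $0$, and I will construct a strictly profitable deviation for $i$. The first step is to note that every $\pi_i \in S_i$ must itself be non-fully-informative, because a fully informative signal attains the receiver's maximum $v_R$, ties for the receiver's top choice with any other signal, and is therefore selected with probability at least $1/n > 0$ under the uniform tie-breaking rule. A consequence is that for $\ppi_i$-a.e.\ $\pi_i \in S_i$, some other sender strictly dominates $\pi_i$ at the receiver for $\ppi_{-i}$-a.e.\ $\pi_{-i}$, which lets me show that sender $i$'s conditional expected utility $U_i(\pi_i \mid \ppi_{-i})$ equals a common constant $C$ determined solely by the identity and signal of the chosen sender under $\ppi_{-i}$.

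Next I will select a ``target'' non-fully-informative signal to improve on. Because $\ppi$ is not fully informative, with positive probability the chosen sender plays a non-fully-informative signal. The main case is that there exist a sender $k \neq i$ and a signal $\pi_k^\star \in \supp(\ppi_k)$ such that $\pi_k^\star$ is non-fully-informative and is uniquely at the top of the receiver's ranking under $\pi_{-i}$ with positive probability. (The degenerate subcase in which only sender $i$'s own signals account for the chosen non-fully-informative events is handled separately: the NE indifference condition applied to any chosen non-fully-informative $\pi_i^\star \in \supp(\ppi_i) \setminus S_i$, together with the value of $C$, forces $v_i(\pi_i^\star)$ to equal sender $i$'s utility when the receiver takes his optimal action; this rigid identity, combined with Assumption~\ref{assumption:possibly-aligned} and a Lemma~\ref{lem:i-improve}-style alignment twist, still yields a signal strictly improving upon $\pi_i^\star$ at sender $i$.) In the main case, Lemma~\ref{lem:i-improve} applied to $\pi_k^\star$ produces $\pi_i^\star \in \Pi_i$ with $\pi_i^\star \succ_R \pi_k^\star$ and $\pi_i^\star \succ_i \pi_k^\star$. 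Using finiteness of the supports (inherited from the finite state and action spaces under the revelation principle), I take the parameter $\epsilon$ in the construction of Lemma~\ref{lem:i-improve} small enough that $v_R(\pi_i^\star)$ strictly exceeds $v_R(\pi_k^\star)$ but is strictly below every other value of $v_R$ that appears in $\bigcup_{\ell}\supp(\ppi_\ell)$ strictly above $v_R(\pi_k^\star)$.

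The candidate deviation is $\ppi_i' = \ppi_i|_{\Pi_i \setminus S_i} + \alpha \cdot \delta_{\pi_i^\star}$, so the utility change is $\alpha \cdot (U_i(\pi_i^\star \mid \ppi_{-i}) - C)$, which I show is strictly positive by partitioning $\pi_{-i}$-realizations into three events: (a) $\pi_k^\star$ is uniquely top---a positive-probability event by choice of $\pi_k^\star$---where $\pi_i^\star$ becomes the new unique top, $i$ is chosen, and his utility rises from $v_i(\pi_k^\star)$ to $v_i(\pi_i^\star) > v_i(\pi_k^\star)$; (b) some signal strictly dominates $\pi_k^\star$ for the receiver, where the $\epsilon$-choice guarantees that $\pi_i^\star$ does not unseat it so sender $i$'s utility is unchanged; (c) ties at $v_R(\pi_k^\star)$, which can be absorbed into (a) by refining $\pi_k^\star$ or by a careful accounting under the uniform tie-breaking rule. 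Summing gives $U_i(\pi_i^\star \mid \ppi_{-i}) > C$, violating NE and completing the contradiction. The hard part will be the clean execution of (c) together with the degenerate subcase; both require a finer use of the explicit construction in Lemma~\ref{lem:i-improve} (and perhaps a preparatory application of Lemma~\ref{lem:simulate} to simulate the full random chosen signal rather than a single $\pi_k^\star$) rather than a one-shot invocation.
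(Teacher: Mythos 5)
Your opening moves match the paper's: you use the receiver's choice being independent of a never-chosen signal to pin the payoff from the mass on $S_i$ at the constant $C=E[v_i(\pi_{D(\ppi_{-i})})]$, and you plan a simulate-and-improve deviation via Lemma~\ref{lem:i-improve}. But the way you select the target signal and account for the deviation's effect contains genuine gaps. First, the finiteness of supports you invoke is not available: each $\ppi_\ell$ is a distribution over the continuum $\Pi_\ell$, so $\supp(v_R(\ppi_\ell))$ can be an interval, and there need be no $\eps>0$ placing $v_R(\pi_i^\star)$ below every support value that exceeds $v_R(\pi_k^\star)$. Without that gap, event (b) breaks down: $\pi_i^\star$ can displace signals $\pi_m$ with $v_R(\pi_m)$ slightly above $v_R(\pi_k^\star)$, and since Lemma~\ref{lem:i-improve} only gives $v_i(\pi_i^\star)>v_i(\pi_k^\star)$ --- not $v_i(\pi_i^\star)>v_i(\pi_m)$ --- this displacement can lose utility for $i$; both the gain on event (a) and this potential loss are of order $\eps$, so the sign of the net change is indeterminate. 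Second, your ``main case'' needs a fixed $\pi_k^\star$ that is \emph{realized and} uniquely top with positive probability; if $\ppi_k$ is atomless no single signal has this property, even though other senders' non-fully-informative signals may be chosen with positive probability in aggregate. That situation falls into neither your main case nor your degenerate subcase (which covers only the event that all chosen non-fully-informative signals are $i$'s own), and you yourself flag case (c) and the degenerate subcase as unfinished.

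The paper avoids all of this by a different choice of target: it takes $\pi_j$ to be the signal maximizing $v_i$ over \emph{all} other senders' potentially chosen signals (those with $v_R(\pi_k)\geq\tau$), so that $C\leq v_i(\pi_j)$. If $\pi_j$ is not fully informative, the improvement $\pi_i'$ from Lemma~\ref{lem:i-improve} satisfies $v_i(\pi_i')>v_i(\pi_j)\geq v_i(\pi_k)$ for everything the receiver could otherwise select; hence on the event that $\pi_i'$ is chosen (which has positive probability since $v_R(\pi_i')>v_R(\pi_j)\geq\tau$) sender $i$ strictly gains relative to whatever is displaced, and on the complementary event nothing changes --- no separation in the distribution of receiver values is needed. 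The remaining possibility, that this $v_i$-maximal $\pi_j$ is fully informative, is handled by either deviating to a fully informative signal (if $v_i(\pi_j)>C$) or showing that all potentially chosen signals of other senders would then be fully informative, contradicting that $\ppi$ is not fully informative. You would need to restructure your target selection along these lines for the argument to close.
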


Before proving the lemma we develop some notation. For a profile $\ppi$, let $T_i = \supp\left(v_R(\ppi_i)\right)$ be the set of possible 
receiver payoffs attainable by a signal in sender $i$'s support, and denote by $\tau_i=\min T_i$.\footnote{Recall that the support of a random variable
is the smallest {\em closed} subset of values with measure 1, and so the minimum of $T_i$ exists.} 
%Note that the receiver's utility from a fully-informative signal is maximal for him, and so must be weakly higher than all payoffs in $T_i$.
%Furthermore, observe that if a signal is not fully-informative then the receiver's utility from following that signal's recommendations will be strictly lower
%than his utility from following the recommendations of a fully-informative signal.
Furthermore, denote by $\tau=\max_i \tau_i$. Observe that the receiver will never choose a signal $\pi_i$ with $v_R(\pi_i)<\tau$, since there will
always be some signal leading to higher utility available. Also, note that all signals $\pi_i$ with $v_R(\pi_i)>\tau$ are potentially
chosen in equilibrium.

Finally, we denote by $\ppi_{-i}$  the profile $\ppi$ with sender $i$ removed. We also 
slightly misuse notation and write $D(\pi_{-i})$ as a uniformly-random choice of 
a sender from the set $\left\{j:\pi_j\succeq_R \pi_k ~\forall k\in\{1,\ldots,n\}\setminus\{i\}\right\}$.

\begin{proofof}{Lemma~\ref{lem:not-chosen}}
$\ppi_i$ is $i$'s equilibrium strategy, which implies that $i$ is indifferent between all but a measure 0 of the signals in the support of $\ppi_i$.
That is, for each such signal, sender $i$ considers the expected utility from that signal: the probability that it is chosen by the receiver times the utility
 $i$ derives from it.
In equilibrium, the expected utility of a particular signal is equal for all signals. 

Now, suppose towards a contradiction that $\ppi_i$ has positive measure on a set of signals that is chosen by the receiver with probability 0.
Thus, by the above, $i$ sometimes plays signals that are almost never chosen by the receiver. This leads to utility arbitrarily close to
$E\left[v_i(\pi_{D(\ppi_{-i})})\right]$ for sender $i$. The equilibrium indifference then implies that this same expected utility is obtained by sender $i$
for all the signals in the support of $\ppi_i$. Thus, his equilibrium payoff is arbitrarily close to $E\left[v_i(\pi_{D(\ppi_{-i})})\right]$.

We now construct a  profitable deviation for sender $i$, as follows. 
Let $j\neq i$ and $\pi_j$ satisfy $v_i(\pi_j)\geq v_i(\pi_k)$ for all senders $k\neq i$ and signals $\pi_k\in\supp(\ppi_k)$ with $v_R(\pi_k)\geq \tau$,
and for which $v_R(\pi_j)\geq\tau$.
That is, $\pi_j$ is the best signal for sender $i$ played in equilibrium by any other sender $j$ that is potentially chosen by the receiver.
If $\pi_j$ is not fully-informative, then
by Lemma~\ref{lem:i-improve} there exists a signal $\pi_i'$ for sender $i$ that is strictly better than $\pi_j$ for both $i$ and the receiver. Sender
$i$'s profitable deviation is then to play $\ppi_i'$, which is the same as $\ppi_i$ except that it replaces
the set of signals that is chosen by the receiver with probability 0 with $\pi_i'$.
This deviation is profitable, contradicting the assumption that $\ppi$ is a NE.

If $\pi_j$ is fully informative, then there are two cases. First, if $v_i(\pi_j)>E\left[v_i(\pi_{D(\ppi_{-i})})\right]$, then sender $i$ has a profitable deviation to always play
a fully-informative signal. If $v_i(\pi_j)=E\left[v_i(\pi_{D(\ppi_{-i})})\right]$, then it must be the case that for all $k\neq i$ and signals $\pi_k\in\supp(\ppi_k)$
for which $v_R(\pi_k)\geq \tau$ the equality
 $v_i(\pi_k)=E\left[v_i(\pi_{D(\ppi_{-i})})\right]$ holds. If some such $\pi_k$ is not fully-informative, then sender $i$ can invoke the same profitable deviation as
 above, by simulating and improving upon $\pi_k$ as in Lemma~\ref{lem:i-improve}. 
 If, on the other hand, all such $\pi_k$'s are fully informative, then whenever the receiver chooses a signal from any sender other than $i$,
 that signal is fully informative. But this implies that the receiver always has a fully-informative signal available, and, since such a signal is optimal for him,
 he will always choose it. This contradicts the assumption that $\ppi$ is not fully-informative.
 \end{proofof}

\subsection{Proof of Theorem~\ref{thm:mixed}}

\begin{proofof}{Theorem~\ref{thm:mixed}}
Fix a NE $\ppi$, and suppose towards a contradiction that $\ppi$ is not fully-informative.
%with positive probability, the profile $\left(\pi_1,\ldots,\pi_n\right)$ drawn from $\ppi$
%is such that a sender $j$ is chosen with positive probability, but that $\pi_j$ is not fully informative. Let $\eta>0$ denote
%the probability that such a $\left(\pi_1,\ldots,\pi_n\right)$ is realized. This implies that, for every sender $i$,
%the probability of drawing some $\pi_i$ that is not fully informative, under $\ppi_i$, is at least $\eta$.
%
%
%We now develop some notation that will be helpful. Let $T_i = \supp\left(v_R(\ppi_i)\right)$ be the set of possible 
%receiver payoffs attainable by a signal in sender $i$'s support, and denote by $\tau_i=\min T_i$.\footnote{Recall that the support of a random variable
%is the smallest {\em closed} subset of values with measure 1, and so the minimum of $T_i$ exists.} 
%Note that the receiver's utility from a fully-informative signal is maximal for him, and so must be weakly higher than all payoffs in $T_i$.
%Furthermore, observe that if a signal is not fully-informative then the receiver's utility from following that signal's recommendations will be strictly lower
%than his utility from following the recommendations of a fully-informative signal.
%

Observe that all senders have the same minimal $\tau_i$, namely $\tau_i=\tau$. To see this, let $j$ be the sender for whom $\tau_j$
is maximal---namely, $\tau_j \geq \tau_k$ for all $k\in\{1,\ldots,n\}$---and suppose there is some sender $i$ for whom $\tau_i<\tau_j$.
But this implies that $\ppi_i$ has positive measure on those signals that lead to receiver payoffs in $[\tau_i,\tau_j)$, signals that are never chosen.
By Lemma~\ref{lem:not-chosen}, this is a contradiction.
%
%
%\begin{lemma}\label{lem:equal-tau}
%$\abs{N_{\min}}= n$.
%\end{lemma}
%
%\begin{proof}
%Suppose not, and that $i\in N_{\min}$ and  $j\not\in N_{\min}$, for some senders $i\neq j$. Furthermore, suppose $j$ is the sender for whom $\tau_j$
%is maximal: $\tau_j \geq \tau_k$ for all $k\in\{1,\ldots,n\}$.
%Let $\zeta = P(\ppi_i = \pi_i\mid v_R(\pi_i)<\tau_j)$. Since $\tau_i\in\supp\left(v_R(\ppi_i)\right)$ and $\tau_i<\tau_j$, it must be the case
%that $\zeta>0$.
%Thus, under $\ppi_i$, with probability $\zeta>0$, sender $i$ chooses a signal that is never chosen by the receiver. Thus, conditional on him choosing
%one of those signals, his utility will be $E\left[v_i(\pi_{D(\ppi_{-i})})\right]$.
%
%We now construct a profitable deviation for sender $i$. Let $k$ and $\pi_k$ satisfy $v_i(\pi_k)\geq E\left[v_i(\pi_{D(\ppi_{-i})})\right]$.
%Such a pair must exist: when players and signals are chosen randomly according to $D(\ppi_{-i})$, sender $i$'s expected utility is $E\left[v_i(\pi_{D(\ppi_{-i})})\right]$.
%Thus, there must be some draw of $\pi_{D(\ppi_{-i})}$ that yields sender $i$ a weakly higher payoff.
%Moreover, 
%by Lemma~\ref{lem:i-improve}, there exists a signal $\pi_i'$ for sender $i$ that is strictly better than $\pi_k$ for both $i$ and the receiver. His profitable deviation is
%$\ppi_i'$, which is the same as $\ppi_i$ except that the $\zeta$ measure of signals that are never chosen are replaced by $\pi_i'$.
%\end{proof}

Now, note that, since $\ppi$ is not fully-informative, 
the signals leading to receiver utility $\tau$ are not fully informative (since fully informative signals lead to utilities that are maximal for the receiver
and are thus always chosen).

Next, we will consider three cases, corresponding to whether all, some, or no sender strategies put positive measure on the set of signals that lead to receiver
utility exactly equal to $\tau$:
\begin{itemize}
\item[(i)] $P(v_R(\ppi_i)=\tau)=0$ for all $i\in \{1,\ldots,n\}$.
\item[(ii)] $P(v_R(\ppi_i)=\tau)>0$ and $P(v_R(\ppi_j)=\tau)=0$ for some $i,j\in \{1,\ldots,n\}$.
\item[(iii)]  $P(v_R(\ppi_i)=\tau)>0$ for all $i\in \{1,\ldots,n\}$.
\end{itemize}

We begin with case (i). Let $i$ be any sender, and note that the equality $P(v_R(\ppi_i)=\tau)=0$ implies that for all $\eps>0$, 
$$P\left(v_R(\ppi_i)\in \left[\tau,\tau+\eps\right]\right)>0$$
(for otherwise, $\tau$ would not be in the support of $v_R(\ppi_i)$).
Thus, sender $i$ chooses a signal $\pi_i$ with $v_R(\pi_i)\in \left[\tau,\tau+\eps\right]$ with positive probability. Furthermore,
the probability that $i$'s signal is chosen by the receiver, when he plays such a signal, approaches 0 with $\eps$:
$$\lim_{\eps\rightarrow 0} P\left(D(\ppi)=i \mid v_R(\pi_i)\in \left[\tau,\tau+\eps\right]\right) = 0.$$

This is because, as $\eps$ approaches 0, the probability that there is some $j\neq i$ and $\pi_j$ such that $v_R(\pi_j)>\tau+\eps$ approaches 1, and
the receiver will choose the sender offering the signal with the highest $v_R$. Thus, sender $i$'s strategy $\ppi_i$ has positive measure on a set of signals
 that is almost never chosen by the receiver. By Lemma~\ref{lem:not-chosen}, this is a contradiction.

We now consider case (ii), that $P(v_R(\ppi_i)=\tau)>0$ and $P(v_R(\ppi_j)=\tau)=0$ for some $i,j\in \{1,\ldots,n\}$.
Here, observe that signals $\pi_i\in\supp(\ppi_i \mid v_R(\ppi_i)=\tau)$ are played by sender $i$ with positive probability, but are chosen with probability
0 (since any signal of sender $j$ will be preferred over $\pi_i$).  As above, this again implies that sender $i$'s strategy $\ppi_i$ has positive measure on a set of signals
 that is almost never chosen by the receiver. Again by Lemma~\ref{lem:not-chosen}, this is a contradiction.

%by $i$'s indifference between all the signals in the support
%of $\ppi_i$, sender
%$i$'s equilibrium payoff is arbitrarily close to $E\left[v_i(\pi_{D(\ppi_{-i})})\right]$.
%
%We can thus construct the same profitable deviation for sender $i$ as in case (i), by either replacing $\ppi$ by $\pi_i'$ that simulates  
%$\pi_j$, where $\pi_j$ satisfies $v_i(\pi_j)\geq v_i(\pi_k)$ for all senders $k$ and signals $\pi_k\in\supp(\ppi_k)$, or by replacing $\ppi$ by a fully-informative
%signal, if all such $\pi_j$'s are fully-informative.

Finally, we consider case (iii), that  $P(v_R(\ppi_i)=\tau)>0$ for all $i\in \{1,\ldots,n\}$. 
Note that, under case (iii), $P\left(v_R\left(\pi_{D(\ppi)}\right)=\tau\right)>0$.

% Fix some sender $i$, and let $\ell$ and $\pi_\ell$ satisfy $v_i(\pi_\ell)\geq v_i(\pi_k)$ for all senders $k$ and signals $\pi_k\in\supp(\ppi_k)$. That is, $\pi_\ell$
% is the signal yielding highest utility to sender $i$ played by any other sender. We consider two cases. 

We consider three cases. First, if for some sender $i$ it holds that $E[v_i(\ppi_i)\mid v_R(\pi_{D(\ppi)})=\tau] \geq E[v_i(\ppi_j)\mid v_R(\pi_{D(\ppi)})=\tau]$
for all $j\neq i$, with a strict inequality for at least one such $j$,
then sender $i$ has the following profitable deviation:
Instead of choosing $(\ppi_i\mid v_R(\ppi_i)=\tau)$, sender $i$ chooses $\ppi_i'$: Replace every $\pi_i\in\supp(\ppi_i\mid v_R(\ppi_i)=\tau)$ with $\pi_i'$ such that with (arbitrarily) small probability $\eps$ signal $\pi_i'$ is realized as a fully-informative signal---that is, it is realized as $\arg\max_{a\in\A} u_R(a,\omega_R)$ in every state $\omega_R\in\Omega_R$---and with probability $1-\eps$ it is realized as $\pi_i$. 
	For any $\eps>0$, the new signal distribution $\ppi_i'$ will be more informative than $\ppi_j$, and so the receiver will choose sender $i$ with probability 1 (conditional on $v_R(\pi_{D(\ppi)})=\tau$), leading to a strict increase in $i$'s utility.
	Since $\eps$ can be arbitrarily small, this does not harm sender $i$ much, and so for small enough $\eps$ the overall gain from the deviation is positive.

The second case is if for some sender $i$ it holds that $E[v_i(\ppi_i)\mid v_R(\pi_{D(\ppi)})=\tau] < E[v_i(\ppi_j)\mid v_R(\pi_{D(\ppi)})=\tau]$ for
some $j\neq i$.
In this case, sender $i$ has the following profitable deviation: Replace $(\ppi_i\mid v_R(\ppi_i)=\tau)$ by $\ppi_i'$, where $\ppi_i'$ is the ``simulation''
of  $(\ppi_j\mid v_R(\ppi_j)=\tau)$. That is,  for each $\pi_j\in\supp(\ppi_j\mid v_R(\ppi_j)=\tau)$,
$i$ plays the simulation $\pi_i'$ of $\pi_j$ guaranteed to exist by Lemma~\ref{lem:simulate}. Under this deviation, sender $i$'s
utility, conditional on receiver utility equal to $\tau$, is equal to $E[v_i(\ppi_j)\mid v_R(\pi_{D(\ppi)})=\tau]$. By assumption, this is a
strict improvement over $E[v_i(\ppi_i)\mid v_R(\pi_{D(\ppi)})=\tau]$.

The third case is if for some sender $i$ it holds that
$E[v_i(\ppi_i)\mid v_R(\pi_{D(\ppi)})=\tau] = E[v_i(\ppi_j)\mid v_R(\pi_{D(\ppi)})=\tau]$
for all $j\neq i$. Note that this expected utility is thus equal to $E[v_i(\pi_{D(\ppi)})\mid v_R(\pi_{D(\ppi)})=\tau]$.
Let $\ell$ and $\pi_\ell$ satisfy $v_i(\pi_\ell)\geq v_i(\pi_k)$ for all senders $k\neq i$ and signals $\pi_k\in\supp(\ppi_k)$. That is, $\pi_\ell$
 is the signal yielding highest utility to sender $i$ played by any other sender.
In this third case sender $i$ has the following profitable deviation:  
If $\pi_\ell$ is not fully-informative, replace $(\ppi_i\mid v_R(\ppi_i)=\tau)$ by $\pi_i'$, where $\pi_i'$ is the improvement on the simulation of $\pi_\ell$ guaranteed to exist by Lemma~\ref{lem:i-improve}. 
By the lemma and assumption, $v_i(\pi_i')>v_i(\pi_\ell)\geq E[v_i(\ppi_i)\mid v_R(\pi_{D(\ppi)})=\tau]$.
Since sender $i$ deviates to a signal that yields him the highest possible utility over all other senders' signals, this deviation is profitable.

If $\pi_\ell$ is fully informative and $v_i(\pi_\ell)> E[v_i(\ppi_i)\mid v_R(\pi_{D(\ppi)})=\tau]$, replace $(\ppi_i\mid v_R(\ppi_i)=\tau)$ by a fully-informative signal.
This again is a profitable deviation. 

Finally, suppose $\pi_\ell$ is fully informative and $v_i(\pi_\ell)= E[v_i(\ppi_i)\mid v_R(\pi_{D(\ppi)})=\tau]=E[v_i(\pi_{D(\ppi)})\mid v_R(\pi_{D(\ppi)})=\tau]$.
 The equality $v_i(\pi_\ell)=E[v_i(\pi_{D(\ppi)})\mid v_R(\pi_{D(\ppi)})=\tau]$ implies that for all senders $k\neq i$ and signals $\pi_k\in\supp\left(\ppi_k \mid v_R(\ppi_k)=\tau\right)$ the equality
 $v_i(\pi_k)=E[v_i(\pi_{D(\ppi)})\mid v_R(\pi_{D(\ppi)})=\tau]$ holds. If some such $\pi_k$ is not fully-informative, then sender $i$ can invoke the same profitable deviation as
 above, by simulating and improving upon $\pi_k$ as in Lemma~\ref{lem:i-improve}. 
 If, on the other hand, all such $\pi_k$'s are fully informative, then whenever the receiver chooses a signal $\pi_k$ from any sender other than $i$, and obtains utility
 $v_R(\pi_k)=\tau$, that signal is fully informative. However, this is a contradiction, since $\ppi$ is not fully-informative and so 
 $\tau$ is strictly less than the receiver's optimal utility.
\end{proofof}

\section{Discussion and Summary}\label{sec:conc}

Underlying our results is an observation that competing senders always prefer to be chosen by the receiver in the interim stage. Consequently, they engage in a kind of a war of attrition and yield more and more information to the receiver until, finally, they disclose all of the receiver's payoff-relevant information.
This result holds whenever the senders are not perfectly aligned in their interests. The sender's advantage when he monopolizes the information becomes a knife-edge advantage with more than one sender. Whenever senders' interests are perfectly aligned they can still enjoy this advantage. However, any perturbation, no matter how small, will result in the complete transfer of the surplus to the receiver.

To model the possibility of senders' misalignment we consider a state space with a product structure, whereby each player's utility function only depends on one entry of the state. Note that, as we allow correlation between senders and receiver,  this model does not exclude the possibility that a sender and receiver care about the same set of states (as is standard in the single-sender model).
What our model allows is the introduction of conditions whereby one sender is not perfectly informed about the states of other senders, and so misalignment of interests is always possible.

%We think that understanding the conditions that permit equilibria that are not fully-informative for a receiver who interacts with all of the receivers in the ex post stage, as in Example~\ref{ex:expost}, is of interest.

\newpage
\bibliography{competing-senders-bib}
\appendix
\begin{Large}\begin{center}\textbf{Appendix}\end{center}\end{Large}
% !TeX root = paperEC.tex

\section{Tie-Breaking Rule}\label{sec:decision-rule-extension}
In this section we discuss extensions of our main result to different tie-breaking rules of the receiver. 

We first observe that, although we assumed that when the receiver is indifferent between several senders' signals he chooses one uniformly at random,
this assumption is not necessary for our proof. In fact, for the proof to go through, we need the following property: For
every profile $\pi=(\pi_1,\ldots,\pi_n)$, sender $i$,  deviation $\pi_i'$, and profile $\pi'=(\pi_i',\pi_{-i})$, if $i\not\in\supp(D(\pi))$ and $i\not\in\supp(D(\pi'))$,
then the distributions $D(\pi)$ and $D(\pi')$ are identical. This property is satisfied by any tie-breaking rule that depends only on the identities and signals of the 
best senders, from the receiver's perspective. It could even involve different distributions for different top senders or different signals that they choose.

We note that without such a property, one problem that may arise is that a deviating sender may be ``punished"
by the receiver for deviating: For example, suppose $i$ is never chosen under profile $\pi$, nor under $\pi'$. However, the receiver may break ties differently
under $\pi'$ than under $\pi$ in such a way as to harm $i$ and so discourage him from deviating. 
This possibility is not accounted for in our proof of Theorem~\ref{thm:mixed}. We do not know whether our main result holds without this restriction
on tie-breaking rules.

For the special case in which we restrict to pure NE, however, we do have a stronger result---one that is even stronger than only allowing for
unrestricted tie-breaking rules. To describe this strengthening, denote by $\mathcal{D}$ the set of all optimal decision rules for the
receiver:
$$\mathcal{D} = \left\{D:\Pi_1\times\ldots\times\Pi_n\mapsto\Delta\left(\{1,\ldots,n\}\right):i\in\supp\left(D\left(\pi_1,\ldots,\pi_n\right)\right)\Rightarrow\pi_i\succeq_R \pi_j ~\forall j\in\{1,\ldots,n\} \right\}.$$
In words, given a profile of players' signals, an optimal decision rule $D$ of the receiver is any distribution over senders whose signals he prefers over others'.
We now define the notion of a (pure) pessimistic Nash equilibrium:

\begin{definition}
A {\em pure pessimistic Nash equilibrium} of the game is a profile  $\left(\pi_1,\ldots,\pi_n\right)$ and a receiver's decision rule $D\in\mathcal{D}$ 
with the following property: For every  sender $i$ and signal $\pi_i'$ there exists a decision rule $D'\in\mathcal{D}$ for which
$$E\left[u_i(a,\omega_i) \mid \pi_{D'(\pi_i',\pi_{-i})}\right]\leq E\left[u_i(a,\omega_i) \mid \pi_{D(\pi_1,\ldots,\pi_n)}\right],$$
where $(\pi_i',\pi_{-i})$ is the profile $\left(\pi_1,\ldots,\pi_n\right)$ but with $\pi_i'$ replacing $\pi_i$.
\end{definition}

Note the quantifier on $D'$: Counterfactually, a deviation is profitable if it is profitable {\em for every} decision rule following the deviation. Hence the qualifier in the definition above---a deviating sender is {\em pessimistic} about the receiver's reaction to his deviation. Note that the set of pure pessimistic Nash equilibria is a superset
of the set of pure NE, and so the weaker definition renders our main result stronger.

In particular, the proof of Theorem~\ref{thm:mixed} actually shows that all pure pessimistic NE of the game are fully informative. To see this, note that the proof,
when restricted to pure equilibria, does the following: In a non-fully-informative profile, whenever a sender is not chosen with probability 1, he can construct 
a profitable deviation under which he will be chosen with probability 1. That is, the deviation is such that his signal is strictly preferred by the receiver over all
other signals, and so the issue of tie-breaking is rendered moot.

%
% SAME ASSUMPTION AS ABOVE, BUT WITHOUT THE UNIQUENESS ASSUMPTION
%
%\begin{assumption}\label{assumption:iR-aligned}
%For every sender $i$, every $\omega_R\in\Omega_R$, and every $a\in\arg\max_{b\in A} u_i(b,\omega_R)$, there exists $\omega_i\in\Omega_i$
%for which $a\in\arg\max_{b\in A} u_i(b,\omega_i)$.
%\end{assumption}

%\begin{assumption}
%For every sender $i$ and every action $a$ for which there exists $\omega_R$ such that $u_R(a,\omega_R)> u_R(b,\omega_R)$ for every $b\in A\setminus\{a\}$, there is an
%$\omega_i^a\in\Omega_i$ such that $u_i(a,\omega_i^a)> u_i(c,\omega_i^a)$ for every $c\in A\setminus\{a\}$.
%\end{assumption}

\end{document}